\documentclass[runningheads]{llncs}

\usepackage[T1]{fontenc}
\usepackage[utf8]{inputenc}
\usepackage{amssymb, bm}
\usepackage{braket}
\usepackage{graphicx}
\usepackage{hyperref}
\usepackage{color}
\usepackage{physics}
\usepackage{booktabs}
\usepackage{enumitem}
\usepackage{cite}   

\newcommand{\CP}{\mathbb{CP}}
\newcommand{\C}{\mathbb{C}}
\newcommand{\R}{\mathbb{R}}
\newcommand{\Herm}{\mathrm{Herm}}
\newcommand{\SL}{\mathrm{SL}}
\newcommand{\SU}{\mathrm{SU}}
\newcommand{\SO}{\mathrm{SO}}

\newcommand{\bx}{\bm{x}}
\newcommand{\bsig}{\bm{\sigma}}
\newcommand{\bn}{\hat{\bm{n}}}
\newcommand{\dg}{\dagger}
\newcommand{\inner}[2]{\langle #1 \vert #2 \rangle}
\newcommand{\arxiv}[1]{\href{https://arxiv.org/abs/#1}{arXiv:#1}}

\begin{document}

\title{The Hermitian inner product selects the time axis,\\ the Born rule measures it}

\titlerunning{The inner product selects the time axis}

\author{Sebastian Zaj\k{a}c}

\institute{Military University of Technology, Warsaw, Poland\\
\email{sebastian.zajac@wat.edu.pl}}

\maketitle

\begin{abstract}
\noindent
The correspondence between $2\times 2$ Hermitian matrices and Minkowski $4$-vectors
recovers Lorentzian symmetries from the internal degrees of freedom of a qubit, with no
reference to an external spacetime. Recent work characterises the resulting Lorentz
invariants and leaves the \emph{mechanism} of emergence---what singles out a time
direction---as an explicit open question. We give an elementary answer and, in doing so,
correct a natural misattribution. The bare spin space $(\C^2,\varepsilon)$ is
$\SL(2,\C)$-symmetric and singles out no axis; so is the null cone it generates. What
selects a future-timelike axis is the choice of a Hermitian inner product, equivalently
a positive reference form $\sigma^0$: this choice---made in passing from a normed space
to a Hilbert space, \emph{before} any probability is assigned---reduces $\SL(2,\C)$ to
its maximal compact $\SU(2)$, the stabiliser of $\sigma^0$. The Born rule enters one
level later: $\inner{\xi}{\xi}=\tr(\sigma^0\,\xi\xi^\dg)$ is the projection of the
state's null vector onto $\sigma^0$, i.e.\ its energy in that frame, and under a boost
it rescales as a Doppler shift. Thus the Hilbert structure selects the axis; the Born
rule is where that axis becomes a measurable energy and where the frame-dependence of
$\lvert\psi\rvert^2$ becomes empirical. The ingredients are classical; what we add is
their identification as the mechanism the recent programme leaves open, with the
symmetry-breaking step located precisely. This is a kinematic identification of that
step, not a dynamical account of why a particular axis is selected. We close by handing
back the many-qubit case, where the datum is a tuple of such choices.
\end{abstract}

\section{Introduction}

Fullwood, Vedral and Guzm\'an-Gonz\'alez have recently argued that Lorentzian
symmetries are \emph{internal} to quantum information: the restricted Lorentz group is
fixed by demanding preservation of purity, and spectral invariants such as the linear
mutual information appear as Lorentz invariants of qubit ensembles
\cite{FVG2026,FV2025}. That programme characterises what is \emph{invariant}, and its
authors are explicit that turning the correspondence into a concrete \emph{mechanism}
of emergence---in particular, what singles out a time direction---remains open. This
note offers one building block toward that mechanism, drawn from the elementary
structure the correspondence already carries.

It is tempting to say that the Born rule---the act of forming a probability
$\lvert\psi\rvert^2$---is what breaks Lorentz covariance and picks out a time direction.
That is not quite right, and locating the step correctly is part of the point. The
breaking happens one level earlier. A complex vector space carries no distinguished
axis until it is equipped with a Hermitian inner product; that is the step from a mere
normed (Banach) space to a Hilbert space, and it is logically prior to assigning any
probabilities. As we show, it is exactly this choice---of the inner product, equivalently
of a positive reference form $\sigma^0$---that reduces $\SL(2,\C)$ to $\SU(2)$ and
selects a future-timelike axis. The Born rule then operates on an \emph{already chosen}
$\sigma^0$; its role is not to break the symmetry but to \emph{measure} the break---to
turn the abstract axis into a number, the energy of the state's null vector in the
chosen frame. The one-line summary is therefore: \emph{the Hilbert structure selects the
time axis, and the Born rule reads it off as an energy.}

We fix the scope plainly, since it bounds what follows. This is a \emph{kinematic}
identification: it names the datum whose choice breaks the symmetry, and shows that
datum to be the inner product rather than the probability rule. It is not a
\emph{dynamical} account. We do not propose a process by which a particular $\sigma^0$
comes to be selected, nor do we claim the observation as a fundamental theory of time;
the datum is put in by hand, through the Hilbert-space structure, and our contribution
is only to locate it. Why a definite $\sigma^0$ should be singled out---in what physical
process, if any---is a separate and harder question, which we flag as the natural
continuation rather than address here.

We stress at the outset that the ingredients are elementary and, individually, standard:
the identification of $\xi^\dg\xi$ with the time component of a spinor's null
``flagpole'' is in \cite{PenroseRindler}; the little group of a timelike vector is
$\SU(2)$; and the frame-dependence of reduced quantum states under boosts is known from
relativistic quantum information \cite{PST2002}. The contribution is their assembly into
a mechanism, with the symmetry-breaking datum named precisely.

\section{The bare space and the null cone are axis-free}

Let $S=\C^2$ be the spin space, with vectors $\xi=(a,b)^{\!\top}$. On $S$ the only
$\SL(2,\C)$-invariant bilinear is the antisymmetric symplectic form
$\varepsilon(\xi,\eta)=\xi^{\!\top}\!\varepsilon\,\eta$,
$\varepsilon=\big(\begin{smallmatrix}0&1\\-1&0\end{smallmatrix}\big)$, and it is
\emph{signature-free}: from an antisymmetric form no metric is built. To reach Minkowski
signature one pairs $S$ with its complex conjugate $\bar S$. Fixing
$\sigma^\mu=(\mathbb 1,\sigma^1,\sigma^2,\sigma^3)$, the real span of Hermitian matrices
is $\Herm(2)=S\otimes\bar S$ under
\begin{equation}\label{eq:map}
  \R^{1,3}\xrightarrow{\ \sim\ }\Herm(2),\qquad
  x\mapsto X=x^0\mathbb 1+\bx\cdot\bsig,\qquad
  \det X=\eta_{\mu\nu}x^\mu x^\nu,\quad \tr X=2x^0,
\end{equation}
with $\eta=\mathrm{diag}(+,-,-,-)$; the group acts by $X\mapsto LXL^\dg$, covering
$\SO^+(1,3)$. This dictionary, and its reading as a link between qubit operations and
special relativity, is classical \cite{PenroseRindler,ArrighiPatricot}.

\begin{lemma}[Null map]\label{lem:null}
For $\xi\neq 0$ the matrix $X(\xi)=\xi\xi^\dg\in\Herm(2)$ is positive semidefinite of
rank one, so its $4$-vector is future-pointing and null. Its components are
\begin{equation}\label{eq:comp}
  x^0=\tfrac12\big(|a|^2+|b|^2\big)=\tfrac12\,\xi^\dg\xi,\qquad
  \bx=\tfrac12\,\xi^\dg\bsig\,\xi ,
\end{equation}
the map is invariant under $\xi\mapsto e^{i\theta}\xi$, and it descends to the density
operator $\rho=X/\tr X=\tfrac12(\mathbb 1+\bn\cdot\bsig)$, $\bn=\bx/x^0$. Thus
$\xi\mapsto X(\xi)$ is the passage $\ket\psi\mapsto\ket\psi\bra\psi$, and $\bn$ is the
point of $\CP^1$ (the Bloch, equivalently celestial, sphere) carried by the ray $[\xi]$.
\end{lemma}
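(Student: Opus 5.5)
The plan is to verify each clause of Lemma~\ref{lem:null} directly from the dictionary \eqref{eq:map}, using only the trace identities of the Pauli matrices. First, positivity and rank: for any $\eta\in\C^2$ we have $\eta^\dg X(\xi)\eta=|\xi^\dg\eta|^2\ge 0$, so $X(\xi)$ is positive semidefinite. Its image is spanned by $\xi$, so it has rank one when $\xi\neq 0$.

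Next, the $4$-vector is read off through \eqref{eq:map}. Rank one gives $\det X=0$, i.e.\ $\eta_{\mu\nu}x^\mu x^\nu=0$, so the vector is null. Positivity gives $\tr X=\xi^\dg\xi>0$, hence $x^0>0$ and the vector is future-pointing. For the components I would invert the map using $\tr(\sigma^\mu\sigma^\nu)=2\delta^{\mu\nu}$ (Euclidean index pairing), which gives $x^\mu=\tfrac12\tr(\sigma^\mu X)$. Applied to $X=\xi\xi^\dg$ this becomes $x^\mu=\tfrac12\,\xi^\dg\sigma^\mu\xi$, which is exactly \eqref{eq:comp}, with $x^0=\tfrac12(|a|^2+|b|^2)$.

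Phase invariance is immediate, since $(e^{i\theta}\xi)(e^{i\theta}\xi)^\dg=\xi\xi^\dg$. For the descent to density operators, set $\rho=X/\tr X=X/(2x^0)$. Substituting $X=x^0\mathbb 1+\bx\cdot\bsig$ gives $\rho=\tfrac12(\mathbb 1+\bn\cdot\bsig)$ with $\bn=\bx/x^0$. Nullity means $|\bx|=x^0$, so $|\bn|=1$, which makes $\rho$ a pure state on the Bloch sphere. Rescaling $\xi\mapsto\lambda\xi$ with $\lambda\in\C^\times$ multiplies $X$ by $|\lambda|^2>0$ and leaves $\rho$ and $\bn$ unchanged. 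Hence $\bn$ depends only on the ray $[\xi]\in\CP^1$. Normalising $\xi=\ket\psi$ gives $\rho=\ket\psi\bra\psi$.

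The only point needing any care is the final identification of $[\xi]\mapsto\bn$ as the standard bijection $\CP^1\cong S^2$. Here I would just note two things. The map is well defined by the rescaling argument above. It is injective because $\rho$ determines its one-dimensional image, which is the line $[\xi]$. Surjectivity onto $S^2$ follows because every pure $\rho=\tfrac12(\mathbb 1+\bn\cdot\bsig)$ with $|\bn|=1$ is a rank-one projector, and so has the form $\xi\xi^\dg$ for a unit vector $\xi$.
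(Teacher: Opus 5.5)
Your proposal is correct and follows essentially the same route as the paper. Both establish positive semidefiniteness and rank one (you via $\eta^\dg X\eta=|\xi^\dg\eta|^2$, the paper via the eigenvalues $\{\xi^\dg\xi,0\}$), then get nullity from $\det X=0$, future-pointing from $x^0=\tfrac12\tr X>0$, and the components from $x^\mu=\tfrac12\tr(\sigma^\mu X)$; your extra checks that $|\bn|=1$ and that $[\xi]\mapsto\bn$ is a bijection $\CP^1\to S^2$ spell out what the paper leaves implicit in its closing sentence.
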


\begin{proof}
$X=\xi\xi^\dg$ has eigenvalues $\{\xi^\dg\xi,0\}$, so is PSD with $\det X=0$; by
\eqref{eq:map} the vector is null and $x^0=\tfrac12\tr X>0$. Components follow from
$x^0=\tfrac12\tr X$, $x^i=\tfrac12\tr(\sigma^iX)$. Under $\xi\mapsto e^{i\theta}\xi$,
$X\mapsto X$, so $X$ depends only on $[\xi]$ and $\tr X$; normalising gives $\rho$.
\end{proof}

Two things are worth noting. The overall phase is the fibre of the Hopf map $S^3\to S^2$,
discarded by the map. And forming $\xi\xi^\dg$ already uses the dagger, i.e.\ the
antilinear identification $S\to\bar S$: the Hermitian-matrix picture presupposes a
reality structure. What it does \emph{not} yet presuppose is any preferred timelike
vector: the null cone is $\SL(2,\C)$-homogeneous, $\SL(2,\C)$ acts on $\CP^1$ by
conformal (M\"obius) maps, and nothing so far singles out an axis. Everything
Lorentz-covariant to this point is blind to a time direction---as are all the invariants
of \cite{FVG2026,FV2025}: linear entropy, mutual information, concurrence, the singlet
correlation function are $\SL(2,\C)$-scalars, hence blind to phase and to normalisation
alike.

\section{The choice of inner product selects the axis}

The question ``what selects a time direction?'' is therefore ``what breaks $\SL(2,\C)$
to a subgroup fixing an axis?''. The answer is the datum one must add to turn $S$ into a
Hilbert space: a Hermitian inner product $\inner{\xi}{\eta}=\xi^\dg\sigma^0\eta$,
equivalently the choice of the positive reference form $\sigma^0=\mathbb 1$.

\begin{proposition}[The inner product is the time axis]\label{prop:main}
The Hermitian inner product on $S$ is not $\SL(2,\C)$-invariant, and
\begin{equation}\label{eq:stab}
  \inner{L\xi}{L\eta}=\inner{\xi}{\eta}\ \ \forall\,\xi,\eta
  \quad\Longleftrightarrow\quad L^\dg L=\mathbb 1
  \quad\Longleftrightarrow\quad L\in\SU(2).
\end{equation}
Under the vector action $L:\sigma^0\mapsto L\sigma^0L^\dg$ on $\Herm(2)$, the stabiliser
of the future-timelike vector $\sigma^0=\mathbb 1$ is exactly this same $\SU(2)$. Hence
choosing the inner product $\sigma^0$ is choosing a unit future-timelike axis, and the
group preserving that choice is the rotations.
\end{proposition}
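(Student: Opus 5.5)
The plan is to prove the three assertions in sequence: first the chain of equivalences in \eqref{eq:stab}, then the identification of the stabiliser of $\sigma^0$, and finally non-invariance of the inner product as a corollary.

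For \eqref{eq:stab}, with $\sigma^0=\mathbb 1$ we have $\inner{L\xi}{L\eta}=\xi^\dg L^\dg L\,\eta$. Invariance for all $\xi,\eta$ means the sesquilinear forms with matrices $L^\dg L$ and $\mathbb 1$ coincide. Taking $\xi,\eta$ among the standard basis vectors recovers every matrix entry, so this is equivalent to $L^\dg L=\mathbb 1$. The converse direction is immediate. Since $L\in\SL(2,\C)$ already has $\det L=1$, the condition $L^\dg L=\mathbb 1$ says exactly that $L$ is unitary of unit determinant, i.e.\ $L\in\SU(2)$. Non-invariance then follows from exhibiting one element outside $\SU(2)$, for instance the boost $L=\mathrm{diag}(e^{\chi/2},e^{-\chi/2})$ with $\chi\neq0$, for which $\inner{L\xi}{L\xi}=e^{\chi}|a|^2+e^{-\chi}|b|^2\neq|a|^2+|b|^2$ in general.

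For the stabiliser statement, note that under $X\mapsto LXL^\dg$ the vector $\sigma^0=\mathbb 1$ goes to $LL^\dg$. So $L$ fixes $\sigma^0$ iff $LL^\dg=\mathbb 1$. For square matrices this is equivalent to $L^\dg L=\mathbb 1$, since a one-sided inverse is two-sided. This is the same condition as in \eqref{eq:stab}, so the two stabilisers coincide as subgroups of $\SL(2,\C)$.

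To justify ``future-timelike unit axis'', I would use \eqref{eq:map}: $\mathbb 1$ corresponds to $x=(1,\mathbf 0)$, with $\det\mathbb 1=1>0$ and $\tr\mathbb 1=2>0$. I would also remark that the image $\SU(2)\to\SO^+(1,3)$ fixes $x^0$ and acts on $\bx$ by $\SO(3)$ (the rotations), via the standard double cover.

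For the reading ``choosing the inner product is choosing an axis'', I would add a short general remark. Any Hermitian inner product has the form $\xi^\dg H\eta$ with $H>0$. Positivity gives $\tr H>0$ and $\det H>0$, so $H$ is future-timelike, and rescaling to $\det H=1$ makes it a unit vector. Writing $H=L_0L_0^\dg$ (e.g.\ $L_0=H^{1/2}$) shows that its stabiliser is the conjugate $L_0\,\SU(2)\,L_0^{-1}$.

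There is no real obstacle here. The only points needing care are the one-sided-inverse step that identifies $LL^\dg=\mathbb 1$ with $L^\dg L=\mathbb 1$, and the bookkeeping distinction between the spinor action and the vector action.
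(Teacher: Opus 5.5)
Your proposal is correct and follows the paper's proof. Invariance of $\xi^\dg L^\dg L\eta$ forces $L^\dg L=\mathbb 1$, hence $L\in\SU(2)$ since $\det L=1$, and the vector stabiliser of $\mathbb 1$ is $\{L:LL^\dg=\mathbb 1\}$, the same group; you only make explicit the boost witnessing non-invariance and the one-sided-inverse step that the paper leaves tacit.

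One caution on your optional general remark. The group preserving $\xi^\dg H\eta$ is $\{L:L^\dg HL=H\}=H^{-1/2}\,\SU(2)\,H^{1/2}$, which is the vector stabiliser of $H^{-1}$ (for $\det H=1$, the spatial reflection of $H$), not of $H$. So the axis equivariantly attached to the form $H$ is $H^{-1}$. The two coincide only at $H=\mathbb 1$, which is why the proposition as stated is unaffected.
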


\begin{proof}
$\xi^\dg L^\dg L\,\eta=\xi^\dg\eta$ for all $\xi,\eta$ iff $L^\dg L=\mathbb 1$, i.e.\ $L$
unitary; with $\det L=1$ this is $L\in\SU(2)$. The little group of $\sigma^0=\mathbb 1$
under $X\mapsto LXL^\dg$ is $\{L:LL^\dg=\mathbb 1\}=\SU(2)$, the same subgroup.
\end{proof}

The two clauses say the same thing from opposite sides: the automorphisms of the Hilbert
inner product and the stabiliser of the timelike axis $\sigma^0$ coincide. $\SU(2)$ acts
on $\CP^1$ by isometries of the round sphere (it fixes the axis); the boosts
$\SL(2,\C)/\SU(2)$ move the axis and act by non-isometric conformal maps. The
symmetry-breaking step is the passage from $(S,\varepsilon)$ to $(S,\sigma^0)$---from a
space with only a symplectic form to a Hilbert space with a positive form---and it is
complete before any state is normalised or any probability computed.

\section{The Born rule measures it: energy and frame-dependence}

Given $\sigma^0$, the Born rule is the evaluation of that form on the state,
\begin{equation}\label{eq:proj}
  \inner{\xi}{\xi}=\xi^\dg\sigma^0\xi=\tr\!\big(\sigma^0\,X(\xi)\big)=2x^0=2\,\eta(t,x),
  \qquad t\leftrightarrow\sigma^0 .
\end{equation}
It does two things. It \emph{reads off} the $\sigma^0$-component of the state's null
vector---the datum already fixed by the choice of form. And it \emph{assigns a physical
value} to that component: $x^0$ is the energy (frequency) of the null datum $\xi\xi^\dg$
in the frame $\sigma^0$. This is where the abstract axis-choice of
Proposition~\ref{prop:main} becomes measurable.

\begin{corollary}[Doppler; the norm is a frequency]\label{cor:doppler}
Under a boost $L=\exp(\tfrac{\phi}{2}\,\hat{\bm m}\cdot\bsig)$ (Hermitian,
$L^\dg=L\neq L^{-1}$), i.e.\ a change of $\sigma^0$, the Born norm transforms as
$\inner{\xi}{\xi}\mapsto\inner{\xi}{L^\dg L\,\xi}$, a rescaling; since
$\inner{\xi}{\xi}=2x^0$, this is exactly the redshift of the frequency of $\xi\xi^\dg$.
Different observers assign different values to $\lvert\psi\rvert^2$ because they project
the same null ray onto different time axes; rotations $L\in\SU(2)$ leave it invariant.
This is the $2$-spinor face of the frame-dependence of reduced quantum descriptions
\cite{PST2002}.
\end{corollary}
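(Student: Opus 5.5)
I will prove Corollary~\ref{cor:doppler} by reading the boost in two equivalent ways, passive and active, and then making the Doppler factor explicit with the components of Lemma~\ref{lem:null}.

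\emph{Passive reading.} Changing the reference form from $\sigma^0=\mathbb 1$ to $L^\dg\sigma^0L=L^\dg L$ replaces the norm $\xi^\dg\xi$ by $\xi^\dg L^\dg L\,\xi$. \emph{Active reading.} Keeping $\sigma^0$ fixed and transforming $\xi\mapsto L\xi$ gives the same number. Using \eqref{eq:proj},
\begin{equation*}
  \inner{L\xi}{L\xi}=\tr\!\big(L^\dg L\,X(\xi)\big)=\tr\!\big(\mathbb 1\,LX(\xi)L^\dg\big)=2x'^0 ,
\end{equation*}
where $x'$ is the Lorentz-transformed null vector. So the new Born norm is twice the time component of the boosted flagpole. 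By Proposition~\ref{prop:main} this coincides with the old norm for every $\xi$ exactly when $L\in\SU(2)$, which settles the rotation clause at once.

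For the explicit rescaling, $L$ is Hermitian, so $L^\dg L=L^2=\exp(\phi\,\hat{\bm m}\cdot\bsig)=\cosh\phi\,\mathbb 1+\sinh\phi\,\hat{\bm m}\cdot\bsig$, using $(\hat{\bm m}\cdot\bsig)^2=\mathbb 1$. Taking the trace against $X(\xi)$ and using $\tr X=2x^0$ and $\tr(\sigma^iX)=2x^i$ from Lemma~\ref{lem:null} gives
\begin{equation*}
  \inner{\xi}{L^\dg L\,\xi}=2\big(x^0\cosh\phi+\hat{\bm m}\cdot\bx\,\sinh\phi\big)
  =\inner{\xi}{\xi}\,\big(\cosh\phi+\sinh\phi\,\hat{\bm m}\cdot\bn\big).
\end{equation*}
The factor in parentheses is positive, since $|\hat{\bm m}\cdot\bn|\le 1$ and $\cosh\phi>|\sinh\phi|$ for $\phi\ne 0$. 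It depends only on the ray $[\xi]$, through $\bn$, and not on its scale. This is the relativistic Doppler factor $\gamma(1+\beta\cos\vartheta)$, with $\gamma=\cosh\phi$, $\beta=\tanh\phi$, and $\vartheta$ the angle between the photon direction $\bn$ and the boost axis $\hat{\bm m}$. It equals $e^{\pm\phi}$ for propagation along $\pm\hat{\bm m}$. Since $x^0$ is the energy of the null vector, this is exactly the frequency shift, and the interpretive sentences of the corollary (different observers, same null ray, different time axes) are then immediate restatements.

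The only delicate point is bookkeeping. I must keep the passive and active conventions straight, and check that the sign of the Doppler factor matches the stated direction of the boost (blueshift versus redshift). Either convention still gives a rescaling by a positive, ray-dependent factor, and that is the content of the claim. I also need $L^\dg\ne L^{-1}$ for $\phi\ne0$, which holds because $L^2\ne\mathbb 1$. I will not compare the result with \cite{PST2002}, since that remark is contextual rather than something to be proved.
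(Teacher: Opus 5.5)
Your argument is correct and follows the paper's own inline justification. The paper reads the norm as $\tr(\sigma^0X(\xi))=2x^0$ via \eqref{eq:proj}, replaces the form by $L^\dg L$ under a boost (equivalently $X\mapsto LXL^\dg$ by cyclicity of the trace), and takes the rotation clause from Proposition~\ref{prop:main}. Your explicit factor $\cosh\phi+\sinh\phi\,\hat{\bm m}\cdot\bn=\gamma(1+\beta\cos\vartheta)$ goes beyond what the paper writes down and usefully sharpens its wording: the ``rescaling'' depends on the ray, and it is a redshift or a blueshift according to the sign of $\hat{\bm m}\cdot\bn$.
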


So the division of labour is clean. The Hilbert structure ($\sigma^0$) breaks
$\SL(2,\C)$ and \emph{selects} the time axis; the Born rule \emph{measures} it, turning
the axis into an energy and making the frame-dependence of $\lvert\psi\rvert^2$ an
empirical statement rather than a formal one. Neither step is the other: a bare density
operator carries $\sigma^0$ implicitly (it is normalised), while the Born number exposes
what that implicit choice costs under a boost.

\begin{remark}[Causal cone $=$ positivity cone]
The map \eqref{eq:map} carries the causal structure onto operator positivity, once
$\sigma^0$ is fixed: future timelike $\Leftrightarrow$ positive definite (a faithful
density matrix after normalisation), future null $\Leftrightarrow$ rank-one PSD (a pure
state $\xi\xi^\dg$), spacelike $\Leftrightarrow$ indefinite (no quantum state). The
closed future cone is the symmetric positivity cone of the Jordan algebra $\Herm(2)$;
its extreme rays are the pure states, the image of Lemma~\ref{lem:null}.
\end{remark}

\begin{remark}[Where the single sign lives]
Splitting $\Herm(2)=\R\mathbb 1\oplus\mathfrak{su}(2)$ gives
$\det X=(x^0)^2-\lvert\bx\rvert^2$: the one timelike sign is the trace
($\mathfrak u(1)$) part, the three spacelike signs the traceless ($\mathfrak{su}(2)$)
part. The distinguished axis is the same $\mathbb 1=\sigma^0$ whose choice is the
inner product; the signature $(1,3)$ is the split
$\mathfrak u(2)=\mathfrak u(1)\oplus\mathfrak{su}(2)$, not the antisymmetric form
$\varepsilon$, which is signature-free.
\end{remark}

\section{Outlook and a question}

The statement is deliberately narrow: it names the datum that plays the role of the
emergence mechanism left open in \cite{FVG2026}, and locates it one step before the
probability rule. We close with the question this raises for that programme. For a single
qubit the Hilbert structure is one choice of $\sigma^0$, hence one timelike axis. For an
$n$-qubit ensemble the invariants of \cite{FVG2026} (the spectrum of $W=\rho\rho^\star$,
and $\Tr W=I_L$) are $\SL(2,\C)^{\otimes n}$ invariants, i.e.\ each factor may be
transformed independently. The complementary datum is then not a single inner product
but a \emph{tuple}---one Hilbert structure, one $\sigma^0$, per factor, hence one time
axis per subsystem. Two readings seem possible: either a single global $\sigma^0$ is
selected (a shared rest frame, breaking $\SL(2,\C)^{\otimes n}$ to a diagonal $\SU(2)$),
or each subsystem carries its own axis and the ``relative time'' between subsystems
becomes an invariant built from the mismatch. Which of these is forced---and whether the
second is related to the temporal or pseudo-entropic structures discussed in
emergent-spacetime approaches---is the natural next question, and one we would be glad to
hear the authors' view on.

Beyond the multi-qubit case, the deeper open direction is the one the scope remark set
aside: a \emph{dynamical} account in which $\sigma^0$ is not posited but selected by some
process---for instance where a family of states carries its own structure in place of a
fixed inner product. Identifying the datum, as done here, is the prerequisite for asking
that question sharply; answering it is a separate undertaking.

\end{document}